\documentclass[conference]{IEEEtran}

\usepackage{blindtext, graphicx}
\usepackage{amsmath}
\usepackage{amsthm}

\usepackage{tikz}

\usetikzlibrary{shapes,arrows}
\usepackage{array, tabularx}
\usepackage{multirow,multicol, makecell, booktabs, caption}

\usepackage{pifont}
\usepackage{array,tabularx}
\usepackage{lipsum}

\usepackage{xcolor}
\usepackage{caption}
\usepackage{subcaption}
\usepackage{authblk}
\usepackage{bbm}
\usepackage{etoolbox}%
\usepackage[utf8]{inputenc}
\usepackage[english]{babel}

\newtheorem{theorem}{Theorem}


\usepackage[colorinlistoftodos]{todonotes}
\newcolumntype{P}[1]{>{\centering\arraybackslash}p{#1}}
\newcolumntype{C}[1]{>{\centering\arraybackslash}m{#1}}
\usepackage{amstext}  
\usepackage{footnote}
\makesavenoteenv{tabular}
\makesavenoteenv{table}

\hyphenation{op-tical net-works semi-conduc-tor}
\usepackage{times}
\usepackage{fancyhdr,graphicx,amssymb}
\usepackage[linesnumbered,ruled,vlined]{algorithm2e}
\usepackage{tikz}

\usepackage{float}
\floatstyle{plaintop}
\restylefloat{table}
\graphicspath{ {./images/} }
\graphicspath{{Pictures/}} 
\usepackage{tikz}
\newcommand{\rectanglesign}[1]{
    \mathbin{
        \mathchoice
        {\buildrectanglesign{\displaystyle}{#1}}
        {\buildrectanglesign{\textstyle}{#1}}
        {\buildrectanglesign{\scriptstyle}{#1}}
        {\buildrectanglesign{\scriptscriptstyle}{#1}}
    }
}

\newcommand\buildrectanglesign[2]{%
    \begin{tikzpicture}[baseline=(X.base), inner sep=1, outer sep=0]
    \node[draw,rectangle] (X)  {\ensuremath{#1 #2}};
    \end{tikzpicture}%
}

\newcommand{\circlesign}[1]{
    \mathbin{
        \mathchoice
        {\buildcirclesign{\displaystyle}{#1}}
        {\buildcirclesign{\textstyle}{#1}}
        {\buildcirclesign{\scriptstyle}{#1}}
        {\buildcirclesign{\scriptscriptstyle}{#1}}
    }
}

\newcommand\buildcirclesign[2]{%
    \begin{tikzpicture}[baseline=(X.base), inner sep=0.1, outer sep=0]
    \node[draw,circle] (X)  {\ensuremath{#1 #2}};
    \end{tikzpicture}%
}

\SetCommentSty{mycommfont}
\include{pythonlisting}

\begin{document}
\title{Polar Coded Repetition for Low-Capacity Channels}




\author{
Fariba Abbasi$^\dag$, Hessam Mahdavifar$^\ddag$, and Emanuele Viterbo$^\dag$\\
$^\dag$ Monash University, Melbourne, VIC3800, Australia\\
$^\ddag$ University of Michigan, Ann Arbor, MI 48104, USA\\
Email: $^\dag$\{fariba.abbasi, emanuele.viterbo\}@monash.edu, $^\ddag$ hessam@umich.edu
}

\maketitle

\begin{abstract}
Constructing efficient low-rate error-correcting codes with low-complexity encoding and decoding have become increasingly important for applications involving ultra-low-power devices such as Internet-of-Things (IoT) networks. To this end, schemes based on concatenating the state-of-the-art codes at moderate rates with repetition codes have emerged as practical solutions deployed in various standards. In this paper, we propose a novel mechanism for concatenating outer polar codes with inner repetition codes which we refer to as \textit{polar coded repetition}. More specifically, we 
propose to transmit a slightly modified polar codeword by deviating from Ar{\i}kan's standard $2 \times 2$ Kernel in a certain number of polarization recursions at each repetition block. We show how this modification can improve the asymptotic achievable rate of the polar-repetition scheme, while ensuring that the overall encoding and decoding complexity is kept almost the same. The    achievable rate is analyzed for the binary erasure channels (BEC).
\end{abstract}

\section{Introduction}

Recently, the Third Generation Partnership Project (3GPP) has introduced various features including Narrow-Band Internet of Things (NB-IoT) and enhanced Machine-Type Communications (eMTC) into the cellular standard in order to address the diverse requirements of massive IoT networks including low-power and wide-area (LPWA) cellular connectivity \cite{RRatasuk}. 

In general, devices in IoT networks have strict limitations on their total available power and are not equipped with advanced transceivers due to cost constraints. Consequently, they often need to operate at very low signal-to-noise ratio (SNR) necessitating ultra-low-rate error-correcting codes for reliable communications. For instance, the SNR of $-13$ dB is translated to capacity being $0.03$ bits per transmission. The solution adopted in the 3GPP standard is to use the legacy turbo codes or convolutional codes at moderate rates, e.g., the turbo code of rate $1/3$, together with up to $2048$ repetitions to support effective code rates as low as $1.6 \times 10^{-4}$. Although this repetition leads to efficient implementations with reduced computational complexity, repeating a high-rate code to enable low-rate communication will result in rate loss and mediocre performance. As a result, studying efficient channel coding strategies for reliable communication in this low SNR regime, where channel coding is the only choice, is necessary \cite{MFereydounian}.

The fundamental non-asymptotic laws for channel coding in the low-capacity regimes have been recently studied in \cite{MFereydounian}. Furthermore, the optimal number of repetitions with negligible rate loss, in terms of the code block length and the underlying channel capacity, is characterized in \cite{MFereydounian}. It is also shown in \cite{MFereydounian} that the state-of-the-art polar codes, proposed by Ar{\i}kan \cite{Arikan}, naturally invoke
this optimal number of repetitions when constructed for low-capacity channels. In another related work, low-rate codes for binary symmetric channels are constructed by concatenating high-rate i.e., rate close to $1$, polar codes with repetitions \cite{Dumer}.




In this paper, we propose an alternative mechanism called {\em coded repetition,}  for the repetition concatenation scheme. 
A slightly modified codeword in each repetition block is transmitted instead of identical codewords in all repetition blocks. The goal is to reduce the rate loss due to the repetition while keeping the overall encoding and decoding complexity the same as in a standard repetition concatenation scheme. 
In particular, we consider polar codes as the outer code. In the proposed polar coded repetition scheme, a slightly modified polar codeword is transmitted in each repetition block by deviating from Ar{\i}kan's standard $2 \times 2$ Kernel in a certain number of polarization recursions at each repetition block. We show that our proposed scheme outperforms the straightforward polar-repetition scheme, in terms of the asymptotic achievable rate, for any given number of repetitions over the binary erasure channel (BEC). The proposed polar coded repetition has almost the same encoding and decoding complexity as the straightforward repetition scheme.



\subsection{Background}
Consider two copies of a binary discrete memoryless channel (B-DMC) $W:\mathcal{X} \rightarrow \mathcal{Y}$ with binary inputs $x_1$, $x_2$ $\in \mathcal{X}$ and outputs $y_1$, $y_2$ $\in \mathcal{Y}$. The transformation $G_2=\footnotesize \begin{pmatrix}
1 & 0\\
1 & 1
\end{pmatrix}$ is applied on the inputs of these two channels and $u_1$ and $u_2$ are generated. Then, $x_1$ and $x_2$ are transmitted through the independent copies of $W$. At the decoder side, $u_1$ is decoded by using two observations $y_1, y_2$ and then $u_2$ is decoded by using the decoded sequence, $\hat{u}_1$, and the observations $y_1, y_2$. The transformation $G_2$ along with this successive decoding, referred to as {\em successive cancellation (SC)}, transforms the two copies of the channel $W$ into two synthetic channels $W^0: W \rectanglesign{*} W: \mathcal{X} \rightarrow \mathcal{Y}^2$ and $W^1: W \circlesign{*} W: \mathcal{X} \rightarrow \mathcal{Y}^2 \times \mathcal{X}$ as follows:
\begin{equation}
\begin{aligned}
W \rectanglesign{*} W (y_1,y_2|u_1)&=\sum_{u_2 \in \mathcal{X}} \frac{1}{2} W(y_1|u_1+u_2) W(y_2|u_2), \\
W \circlesign{*} W (y_1,y_2,u_1|u_2)&=\frac{1}{2} W(y_1|u_1+u_2) W(y_2|u_2).
\end{aligned}
\end{equation}
Here, the channel $W^0$ is \textit{weaker} (i.e., less reliable) compared to $W$, while the channel $W^1$ is \textit{stronger} (i.e., more reliable) compared to the channel $W$.
The quality of a channel is measured by a reliability metric such as the Bhattacharyya parameter defined as 
\begin{equation}
Z(W)\overset{\Delta}{=}\sum_{y \in \mathcal{Y}} \sqrt{W(y|0) W(y|1)},
\end{equation}
which is equal to the erasure probability for BECs, i.e., for BEC($\epsilon$), $Z(W)=\epsilon$. The Bathacharyya parameters of the synthetic channels follow the properties
\begin{equation}
\begin{aligned}
Z(W^1)&= Z(W)^2,\\
Z(W^0)&\leq 2 Z(W)-Z(W)^2, \label{Bathacharyya}
\end{aligned}
\end{equation}
 with equality in (\ref{Bathacharyya}) iff $W$ is a BEC channel. 

If we continue applying the transformation $G_2$ recursively $m$ times, we will obtain $n=2^m$ synthetic channels $\{W_m^{(i)}\}_{i \in \{0, 1, \ldots, n-1\}}$.
More specifically, if we let $\{i_1, i_2, ..., i_m\}$ be the binary expansion of $i=\{0, 1, ..., n-1\}$ over $m$ bits, where $i_1$ is the most significant bit and $i_m$ is the least significant one, then we define the synthetic channels $\{W_m^{(i)}\}_ {i\in \{0,..., n-1\}}$ as
\begin{equation}
W_m^{(i)}=(((W^{i_1})^{i_2})^{...})^{i_m}.\label{synthetic channel} 
\end{equation}

Ar\i kan in his seminal paper, \cite{Arikan}, showed that as $m \rightarrow \infty$, these $2^m$ synthetic channels are either purely noiseless or purely noisy channels. 
Thus, on the encoder side, using $k$ entries of the input vector $u_0^{n-1}$ as the information bits and setting the remaining entries to zero (frozen bits) will provide almost error-free communication. 
Hence, an $(n=2^m, k)$ polar code is a linear block code generated by $k$ rows of $G_n=G_2^{\otimes m}$,  which correspond to the best $k$ synthetic channels. Here, $.^{\otimes m}$ is the $m$-times Kronecker product of a matrix with itself.

Repetition coding is a simple way of designing a practical low-rate code. 
Let 
$r$ 
denote the number of the repetitions and $N$, the length of the code. For constructing the repetition code, first, one needs  to design a smaller outer code (e.g. polar codes) of length $n=N/r$ for channel $W^r$ and then repeat each of its code bits $r$ times. Consequently, the length of the final code will be $n \times r = N$. 
This is equivalent to transmitting an input bit over the $r$-repetition channel $W^r$ and outputs an $r$ tuple. 
For example, if $W$ is BEC$(\epsilon)$, then its corresponding $r$-repetition channel is $W^r = BEC(\epsilon^r)$. 
The main advantage of this concatenation scheme is that the decoding complexity is essentially reduced to that of the outer code making it appealing to low-power applications. This comes at the expense of loss in the asymptotic achievable rate especially if the number of the repetitions is large. Suppose that $C(W)$ is the capacity of the channel $W$ and $N C(W)$ is the capacity corresponding to $N$ channel transmissions. With repetition coding, since we transmit $n$ times over the channel $W^r$, the capacity will be reduced to $n C(W^r)$. Note that, in general, we have $n C(W^r) \leq N C(W)$ and the ratio vanishes with growing $r$. Let's consider BEC($\epsilon$) as an example with $r=2$. If $\epsilon=0.5$, then $\frac{1}{2}C(W^2)=0.375$ whereas $C(W)=0.5$. However, when $\epsilon$ is close to 1, $C(W^2)=1-\epsilon^2$ is very close $2C(W)=2(1-\epsilon)$.

\section{Proposed Scheme}
In this section, the proposed polar coded repetition scheme is discussed. It is shown how to improve the performance of the straightforward repetition scheme in the low-rate regime, while keeping the computational complexity almost the same as the original one.

Consider an outer polar code with $r=2^t$ repetitions and let $c$ denote a polar codeword of length $n=2^m$ designed for transmission over a channel $W$, $r$ times. 
Owing to the recursive structure of the polar codes, one can write the polarization transform matrix as $G_n=G^{\prime}_{r^ \prime} \otimes G_2^{\otimes (m-t^{\prime})}$, where $G^{\prime}_{r^ \prime}$ is an $r^{\prime} \times r^{\prime}$ binary matrix with $r^{\prime}=2^{t^{\prime}}$. 
In our proposed scheme, we consider a different $G^{\prime}_{r^{\prime}}$ in each repetition block, while keeping $G_2^{\otimes (m-t^{\prime})}$ the same in all of them. In other words, the first $t'$ recursions of Ar{\i}kan's polarization transform are modified in each repetition while the rest of $m-t'$ recursions are kept the same.
Note that if one chooses $r^{\prime} = n$, i.e., the transmission in each block being different, then the channel capacity $C(W)$ can be achieved. 
However, we choose $r^{\prime}=r$ to have a comparable complexity with the straightforward polar-repetition scheme. The complexity of the simple polar-repetition and the proposed modified polar-repetition schemes will be provided in subsection \ref{Complexity}. 


We illustrate the idea through some examples with two and four repetitions and constructed with \textit{regular} and \textit{irregular} polar coding approaches. Then, we generalize the regular scheme to accommodate an arbitrary repetition $r$.  

\subsection{Examples for two and four repetitions}
    In this subsection, we provide three examples for two and four repetitions as follows.
    
    \textbf{Example 1 (Two repetitions):} Consider an outer polar code with two repetitions. 
    Hence, the polar codeword $c$ needs to be designed for $W^2= W \circlesign{*} W$. The recursive structure of polar codes implies that codeword $c= (c_1 \oplus c_2, c_2)$ is constructed from the generator matrix $G_n=G^{\prime}_2 \otimes G_2^{\otimes (m-1)}$, where $G^{\prime}_2=\footnotesize \begin{pmatrix}
1 & 0\\
1 & 1
\end{pmatrix}$ and $c_1$ and $c_2$ are polar codewords of length $n/2$ generated from $G_2^{\otimes (m-1)}$. 
    
Now, we consider an alternative scheme where in each repetition, we transmit different combinations of $c_1$ and $c_2$ by choosing different $G^{\prime}_2$ in each of them. Let ${G^{\prime}}^{(i)}_2$ be a lower triangular matrix\footnote { \cite{Fazeli} showed that the column permutations and the one-directional row operations can always transform a non-singular kernel ${G^{\prime}}^{(i)}_2$ to a lower triangular kernel $G^{\prime\prime}$ with the same polarization behavior.} ${G^{\prime}}^{(i)}_2=\footnotesize \begin{pmatrix}
1 & 0\\
e & 1
\end{pmatrix}$, where $e \in \mathcal{F}_2$ and $i=\{1,2\}$ is the index of the transmission (see TABLE. \ref{2transmision} for two possible matrices). 
  \begin{table}[h]
\begin{center}
\scalebox{0.9}{
\begin{tabular}{|P{1.6cm}|P{1.5cm}|}
 \hline
 Pattern no. & ${G^{\prime}}^{(i)}_2$ \\
\hline
 $P^{(0)}_2 $ &  $\footnotesize \begin{pmatrix} 1 & 0\\ 1 & 1
 \end{pmatrix}$ \\  
\hline
 $P^{(1)}_2$ &   $\footnotesize \begin{pmatrix} 1 & 0\\ 0 & 1
 \end{pmatrix} $  \\ 
\hline
 
\end{tabular}%
}
\end{center}
\caption{Two possible matrices for two repetitions}
\label{2transmision}
\vspace{-6mm}
\end{table}
     There are three possible cases for two transmissions as follows.
     \begin{enumerate}
    \item  ${G^{\prime}}^{(1)}_2=\footnotesize \begin{pmatrix}
1 & 0\\
1 & 1
\end{pmatrix}$ and ${G^{\prime}}^{(2)}_2=\footnotesize \begin{pmatrix}
1 & 0\\
1 & 1
\end{pmatrix}$: In this case, $(c_1 \oplus c_2, c_2)$ and $(c_1 \oplus c_2, c_2)$ are transmitted in each repetition. By considering both transmissions, one concludes that codeword $c_1$ is implicitly designed for the effective channel that the sub-block of length $n/2$ observes, i.e., for $W^2 \rectanglesign{*} W^2$ and $c_2$ is designed for $W^2 \circlesign{*} W^2$. As a result, the capacity per channel use per transmission for this case and specifically for BEC will be
\begin{equation*}
 \begin{aligned}
C^{(1)}_{2}& =(C(W^2 \rectanglesign{*} W^2)+C(W^2 \circlesign{*} W^2))/4\\
&= (1-\epsilon^2)/2.
\end{aligned}
\end{equation*}

\item ${G^{\prime}}^{(1)}_2=\footnotesize \begin{pmatrix}
1 & 0\\
1 & 1
\end{pmatrix}$ and ${G^{\prime}}^{(2)}_2=\footnotesize \begin{pmatrix}
1 & 0\\
0 & 1
\end{pmatrix}$: For this case, $(c_1 \oplus c_2, c_2)$ and $(c_1, c_2)$ are transmitted in the first and second repetitions. Codeword $c_1$ is designed for the effective channel that the sub-block of length $n/2$ observes, i.e., for $(W \rectanglesign{*} W^2) \circlesign{*}  W$, and $c_2$ is designed for $W^2 \circlesign{*} W$. As a result, the capacity per channel use per transmission for this case is
\begin{equation*}
 \begin{aligned}
C^{(2)}_{2}& =(C((W \rectanglesign{*} W^2) \circlesign{*} W)+C(W^2 \circlesign{*} W))/4\\
&= (2-\epsilon^2-2\epsilon^3+\epsilon^4)/4.
 \end{aligned}
\end{equation*}

\item ${G^{\prime}}^{(1)}_2=\footnotesize \begin{pmatrix}
1 & 0\\
0 & 1
\end{pmatrix}$ and ${G^{\prime}}^{(2)}_2=\footnotesize \begin{pmatrix}
1 & 0\\
0 & 1
\end{pmatrix}$: In the first and second repetitions, $(c_1, c_2)$ and $(c_1, c_2)$ are transmitted. Both Codewords $c_1$ and $c_2$ are designed for the effective channel that the sub-block of length $n/2$ observes, i.e., for $W^2$. As a result, the capacity for this case will be
\begin{equation*}
 \begin{aligned}
C^{(3)}_{2}& =(C(W^2)+C(W^2))/4\\
&= (1-\epsilon^2)/2.
 \end{aligned}
\end{equation*}
\end{enumerate}
It can be observed that for $0<\epsilon<1$, the capacity of case 2 is larger than the capacities of both cases 1 and 3, which are simple repetition schemes. In other words,
    \begin{equation}
       C((W \rectanglesign{*} W^2) \circlesign{*} W)+C(W^3) > 2 C(W^2),
     \label{comparison}
    \end{equation}
where the right hand side of (\ref{comparison}) is the capacity for the straightforward repetition scheme and the left hand side of (\ref{comparison}) is the capacity of case 2.

In the proposed modified approach, which we refer to as coded repetition scheme, we consider case 2. This modified scheme has the same encoding/decoding complexity
compared to a simple repetition scheme.

\textbf{Example 2 (Four repetitions with regular polar codes):} Consider an outer polar codes with four repetitions. Since we intend to keep the complexity of the proposed scheme the same as the complexity of the simple repetition one, let's consider all possible Kronecker products of the patterns $P^{(0)}_2$ and $P^{(1)}_2$ for ${G^{\prime}}^{(i)}_{4R}$, $i=\{1, 2, 3, 4\}$ as the ones depicted in Table \ref{4transmissions}. We call these patterns {\em regular} polar codes.
\begin{table}
\begin{center}
\scalebox{0.8}{
\begin{tabular}{|P{1.6cm}|P{3.5cm}|}
 \hline
  Pattern no. & ${G^{\prime}}^{(i)}_{4R}$ \\
\hline
 $P^{(0)}_{4R} $ & $\footnotesize \begin{pmatrix} 1 & 0\\ 1 & 1
 \end{pmatrix} \otimes \footnotesize \begin{pmatrix} 1 & 0\\ 1 & 1
 \end{pmatrix}$ \\  
\hline
 $P^{(1)}_{4R}$ &  $\footnotesize \begin{pmatrix} 1 & 0\\ 1 & 1
 \end{pmatrix} \otimes \footnotesize \begin{pmatrix} 1 & 0\\ 0 & 1
 \end{pmatrix}$  \\ 
\hline
 $P^{(2)}_{4R}$ & $\footnotesize \begin{pmatrix} 1 & 0\\ 0 & 1
 \end{pmatrix} \otimes \footnotesize \begin{pmatrix} 1 & 0\\ 1 & 1
 \end{pmatrix}$  \\ 
 \hline
 $P^{(3)}_{4R}$ &  $\footnotesize \begin{pmatrix} 1 & 0\\ 0 & 1
 \end{pmatrix} \otimes \footnotesize \begin{pmatrix} 1 & 0\\ 0 & 1
 \end{pmatrix}$ \\ 
\hline
\end{tabular}%
}

\end{center}
\caption{All possible cases for four repetitions}
\label{4transmissions}
\vspace{-8mm}
\end{table} 
Then, for four transmissions, we try all $35$ multi-subsets of size $4$ from the set $\{P^{(0)}_{4R}, P^{(1)}_{4R}, P^{(2)}_{4R}, P^{(3)}_{4R}\}$ to find the best one in terms of the capacity. The channel that each codeword $c_i$ observes follows the recursive structure shown in Fig. \ref{fig:P4}. 
\begin{figure*}[h]
\centering
\includegraphics[width=0.78\linewidth]{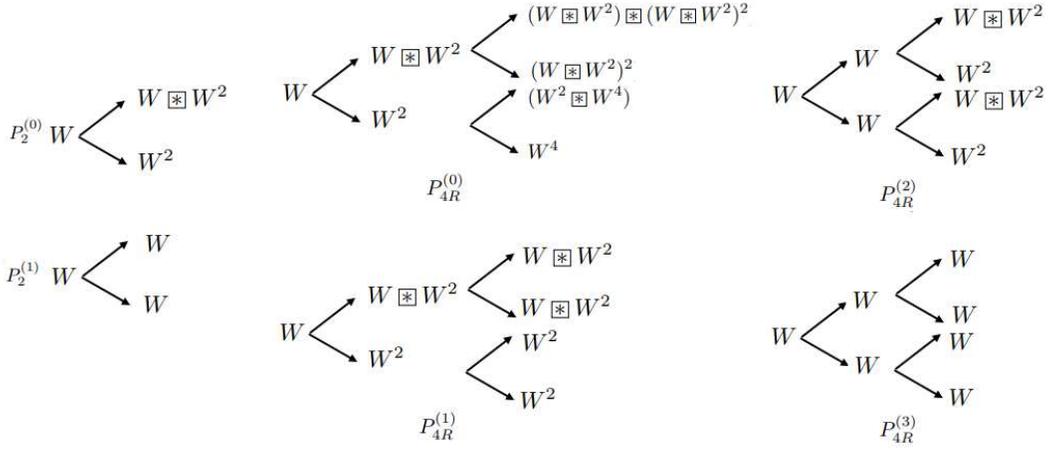}
\caption{The recursive structure of the channels that each codeword $c_i$ observes for two and four transmissions.}
\label{fig:P4}
\end{figure*}
With a simple search among these $35$ multi-subsets, it is found that the pattern $(P^{(0)}_{4R}, P^{(3)}_{4R}, P^{(3)}_{4R}, P^{(3)}_{4R})$ has the largest capacity.

In this modified repetition scheme, $ (c_1 \oplus c_2 \oplus c_3 \oplus c_4, c_2 \oplus c_4, c_3 \oplus c_4, c_4), (c_1, c_2, c_3, c_4), (c_1, c_2, c_3, c_4)$ and $(c_1, c_2, c_3, c_4)$ are transmitted in the first, second, third and fourth transmissions, respectively.
Codword $c_1$ is constructed for the effective channel that the first sub-block of length $n/4$ observes, i.e., for $W_1=((W \rectanglesign{*} {W^2}) \rectanglesign{*} (W \rectanglesign{*} {W^2})^2) \circlesign{*} W^3$, $c_2$  for $W_2=(W \rectanglesign{*} {W^2})^2 \circlesign{*} W^3$, $c_3$ for $W_3=({W}^2 \rectanglesign{*} {W}^4) \circlesign{*} W^3$ and $c_4$ for $W_4={W}^4 \circlesign{*} W^3$. For BEC $W$,  the capacity of the modified scheme is larger than that of the repetition scheme for $0<\epsilon<1$:
\begin{equation}
    C_{\text{4R}}= C(W_1)+C(W_2)+C(W_3)+C(W_4) > 4 C(W^4).
\end{equation}

\textbf{Example 3 (Four repetitions with irregular polar codes\footnote{
Note that regular scheme is a special case of the irregular scheme.}):}
We consider an alternative type of patterns for $4$ repetitions, referred to as {\em irregular} polar codes, which have the same computational complexity as the simple repetition scheme. These $8$ irregular patterns are constructed with ${G^{\prime}}^{(i)}_{4I}=\footnotesize \begin{pmatrix}
P^{(j)}_2 & 0\\
P^{(j)}_2 & P^{(j)}_2
\end{pmatrix}$ and ${G^{\prime}}^{(i)}_{4I}=\footnotesize \begin{pmatrix}
P^{(j)}_2 & 0\\
0 & P^{(j)}_2
\end{pmatrix}$, where $j=\{0,1\}$ and $i=\{1, 2, \ldots, 8\}$ (see Fig. \ref{fig:I41}).
\begin{figure*}
\vspace{-5mm}
\centering
\includegraphics[width=0.98\linewidth]{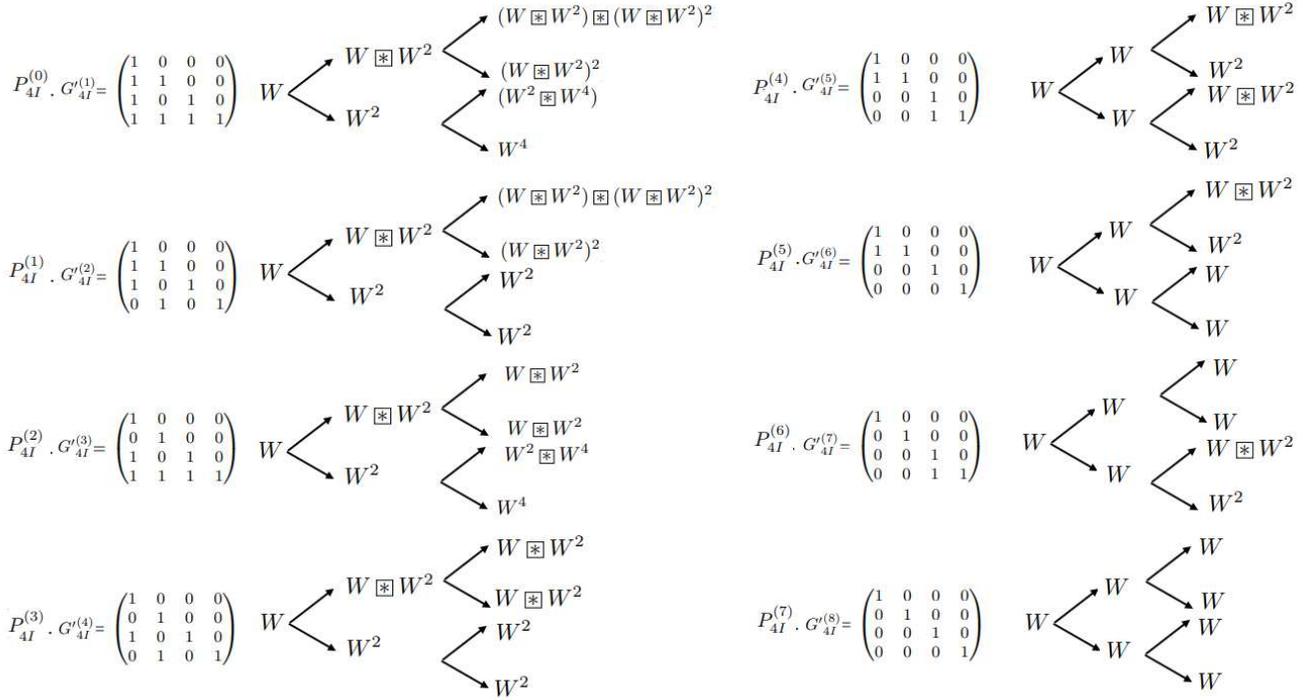}
\caption{All $8$ possible irregular kernels ${G^{\prime}}^{(i)}_{4I}$ for $4$ transmissions and the corresponding recursive structure of the channels that each codeword $c_i$ observes.}

\label{fig:I41}
\end{figure*}

With a simple search among all $330$ multi-subsets of size $4$ from the set $\{P^{(k)}_{4I}\}_{k=0}^{7}$,
it is found that the pattern $(P^{(2)}_{4I}, P^{(5)}_{4I}, P^{(7)}_{4I}, P^{(7)}_{4I})$ has the largest capacity. The channel that each codeword $c_i$ observes follows the recursive structure shown in Fig. \ref{fig:I41}.
In this scheme, $(c_1 \oplus c_3 \oplus c_4, c_2 \oplus c_4, c_3 \oplus c_4, c_4)$, $(c_1 \oplus c_2, c_2, c_3, c_4)$, $(c_1, c_2, c_3, c_4)$ and $(c_1, c_2, c_3, c_4)$ are transmitted in the first, second, third and fourth transmissions, respectively. Codeword $c_1$ is constructed for the effective channel $W_1=(W \rectanglesign{*} W^2) \circlesign{*} (W \rectanglesign{*} W^2) \circlesign{*} W \circlesign{*} W$, $c_2$ for $W_2=(W \rectanglesign{*} W^2) \circlesign{*} W^2 \circlesign{*} W \circlesign{*} W$, $c_3$ for $W_3=(W^2 \rectanglesign{*} W^4) \circlesign{*} W \circlesign{*} W \circlesign{*} W$ and $c_4$ for $W_4=W^4 \circlesign{*} W \circlesign{*} W \circlesign{*} W$. For BEC $W$, the capacity of the modified scheme with irregular polar codes is larger than the one with regular polar codes for $0<\epsilon<1$. In other words,
\begin{equation}
C_{\text{4I}}= C(W_1)+C(W_2)+C(W_3)+C(W_4) > C_{\text{4R}}.
\end{equation}

\subsection{General case for regular polar codes}
For the general case of $r=2^t$ repetitions with regular polar codes, we consider all $r$ possible $t$ times Kronecker products of the patterns $P_2^{(0)}$ and $P_2^{(1)}$, as $P^{(i)}_r$, $i=0, 1, \ldots, r-1$. In the proposed scheme, we use $P^{(0)}_r=(P_2^{(0)})^{\otimes t}$ for the first transmission and $P^{(r-1)}_{r}=(P_2^{(1)})^{\otimes t}$ for the rest $r-1$ ones.
For BEC $W$ with an erasure probability $\epsilon$, let's define ${Z_{P^{(i)}_r}(W^{(k)}_r)}\overset{\Delta}{=}{Z_{(i_1, \ldots, i_t)}(W^{(k)}_r)}$ as the erasure probabilities of the channels that each codeword $c_k$, $k=\{1,2, \ldots, r\}$ for pattern $P^{(i)}_r$ observes and $\{i_1, i_2, \ldots i_t\}$ as the $t$-bit binary expansion of $i$. 
Then, the recursive formula for computing $Z_{P^{(i)}_r}(W_r^{(k)})$ can be written as 
\begin{equation}
\begin{aligned}
    Z_{(i_1,\ldots, i_t)}&(W^{(2j-1)}_r)=Z_{(i_1, \ldots, i_{t-1})}(W^{(j)}_{\frac{r}{2}}) \times\\
    & [1 + Z_{(i_1, \ldots, i_{t-1})}(W^{(j)}_{\frac{r}{2}})-  Z_{(i_1, \ldots, i_{t-1})}^2(W^{(j)}_{\frac{r}{2}})]^{(1-i_t)},\\
    Z_{(i_1, \ldots, i_{t})}&(W^{(2j)}_r)=Z_{(i_1, \ldots, i_{t-1})}(W^{(j)}_{\frac{r}{2}}) \times\\
     & [Z_{(i_1, \ldots, i_{t-1})}(W^{(j)}_{\frac{r}{2}})]^{(1-i_t)},
    \end{aligned}
    \end{equation}
where $Z(W^{(1)}_1)=\epsilon$ and $j=1, 2, \ldots, \frac{r}{2}$.
Hence, the capacity for the proposed scheme will be
  \begin{equation}
\begin{aligned}
    C_{rR}&=\frac{r-\sum_{k=1}^r  Z_{P^{(0)}_r}(W^{(k)}_r) \times (Z_{P^{(r-1)}_{r}}(W^{(k)}_r))^{r-1}}{r^2}.
    \end{aligned}
    \end{equation}
    Since $Z_{P^{(r-1)}_{r}}(W^{(k)}_r)=\epsilon$, for all $k=1, 2, \ldots ,r$, we will have
    \begin{equation}
\begin{aligned}       
C_{rR}=\frac{r-\sum_{k=1}^r  Z_{P^{(0)}_r}(W^{(k)}_r) \times \epsilon ^{r-1}}{r^2}.
     \end{aligned}
 \end{equation}

Next, we show that $C_{rR} > \frac{C(W^r)}{r}$ for any $r$ repetitions and  $0<\epsilon<1$. In other words,
\begin{equation}
\begin{aligned}       
         \sum_{k=1}^r  Z_{P^{(0)}_r}(W^{(k)}_r) & < r \epsilon.\label{Theorem}
     \end{aligned}
 \end{equation}
To this end, we first prove that $\sum_{k=1}^r Z_{P^{(0)}_r}(W^{(k)}_r)- r \epsilon$ has zeros at $\epsilon=0$ and $\epsilon=1$.   

     
     

\begin{theorem}
$Z_{P^{(0)}_r}(W^{(k)}_r)=0$ at $\epsilon=0$ and $Z_{P^{(0)}_r}(W^{(k)}_r)=1$ at $\epsilon=1$ for all $k=\{1, 2, \ldots, r\}$. 
\end{theorem}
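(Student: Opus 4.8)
The plan is to prove both claims simultaneously by induction on $t$, where $r=2^t$. The base case is $t=0$, i.e. $r=1$: here the only channel is $W^{(1)}_1$ with $Z(W^{(1)}_1)=\epsilon$, which equals $0$ at $\epsilon=0$ and $1$ at $\epsilon=1$, exactly as required.

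For the inductive step, the key observation is that the pattern $P^{(0)}_r=(P_2^{(0)})^{\otimes t}$ corresponds to the binary expansion $i=0$, so every digit satisfies $i_1=\cdots=i_t=0$, and in particular $i_t=0$ gives $(1-i_t)=1$. Writing $z_j:=Z_{P^{(0)}_{r/2}}(W^{(j)}_{r/2})$ for $j=1,\ldots,r/2$, the recursion therefore collapses to the two simplified forms
\begin{equation*}
Z_{P^{(0)}_r}(W^{(2j-1)}_r)=z_j\bigl(1+z_j-z_j^2\bigr),\qquad Z_{P^{(0)}_r}(W^{(2j)}_r)=z_j^2 .
\end{equation*}
Note that the truncated index $(i_1,\ldots,i_{t-1})$ is again the all-zero expansion, so it indeed refers to the half-size pattern $P^{(0)}_{r/2}$, to which the induction hypothesis applies.

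I would then evaluate at the two endpoints. By the induction hypothesis, at $\epsilon=0$ we have $z_j=0$ for every $j$, so both formulas return $0$; at $\epsilon=1$ we have $z_j=1$, so the first formula returns $1\cdot(1+1-1)=1$ and the second returns $1^2=1$. Since every index $k\in\{1,\ldots,r\}$ has the form $2j-1$ or $2j$ for some $j\in\{1,\ldots,r/2\}$, this exhausts all $k$ and closes the induction.

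I do not anticipate a genuine obstacle here: the statement ultimately amounts to checking that $0$ and $1$ are fixed points of the single-level recursion once it is specialized to $i_t=0$. The only point requiring care is the bookkeeping, namely verifying that truncating the all-zero pattern yields the all-zero pattern one level down, so that the induction hypothesis is invoked on the correct half-size channels, and that the even/odd split of $k$ covers the entire index range.
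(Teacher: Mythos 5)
Your proof is correct and takes essentially the same route as the paper: the paper writes $Z_{P^{(0)}_r}(W^{(k)}_r)$ as a composition $f_{k_1}(f_{k_2}(\cdots f_{k_t}(\epsilon)))$ of the two one-level maps $f_0(a)=a+a^2-a^3$ and $f_1(a)=a^2$ (exactly your odd/even branches after setting $i_t=0$) and concludes from $f_{k_i}(0)=0$, $f_{k_i}(1)=1$. Your induction on $t$ is just this same observation phrased recursively, with the added (correct) bookkeeping that truncating the all-zero pattern gives the all-zero pattern one level down.
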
 
 
 \begin{proof}
 Let us write the recursive formula for erasure probability as $Z_{P^{(0)}_r}(W_r^{(k)})=f_{k_1}(f_{k_2}(...f_{k_t}(\epsilon))))$, where $k_i=\{0, 1\}$, $i=\{1, 2, \ldots, t\}$ 
and $f_0(a)=a+a^2-a^3$, $f_1(a)=a^2$, $\forall k=\{1, 2, \ldots, r\}$.

Since $f_{k_{i}}(a)|_{a=1}=1$ and $f_{k_{i}}(a)|_{a=0}=0$, by using recursion, we conclude $Z_{P^{(0)}_r}(W^{(k)}_r) =1$ at $\epsilon=1$ and  $Z_{P^{(0)}_r}(W^{(k)}_r) =0$ at $\epsilon=0$ $\forall k=\{1, 2, \ldots r\}$. 
\end{proof}

Then, one can use Sturm algorithm\footnote{Although Sturm's theorem is a complete solution for finding the number of the real roots of the polynomials, when the degree of the polynomial increases, it isn't efficient in terms of implementation. The algorithm proposed in \cite{Vincent} is more efficient for higher degrees.} \cite{Sturm} to show that $\sum_{k=1}^r Z_{P^{(0)}_r}(W^{(k)}_r)- r \epsilon$ doesn't have  any root in $\epsilon=(0,1)$. Finally, one can choose an $\epsilon$ in the interval $(0,1)$ and compare the values of $\sum_{k=1}^r Z_{P^{(0)}_r}(W^{(k)}_r)$ and $ r\epsilon$ at that point to see that the capacity of proposed modified scheme is greater than the repetition one for $r$ number of repetitions. 
Fig. (\ref{fig:proof}) shows the left and the right sides of eq. (\ref{Theorem}) for $r=4$.
\begin{figure}[h]
\vspace{-4mm}
\centering
\includegraphics[width=1\linewidth]{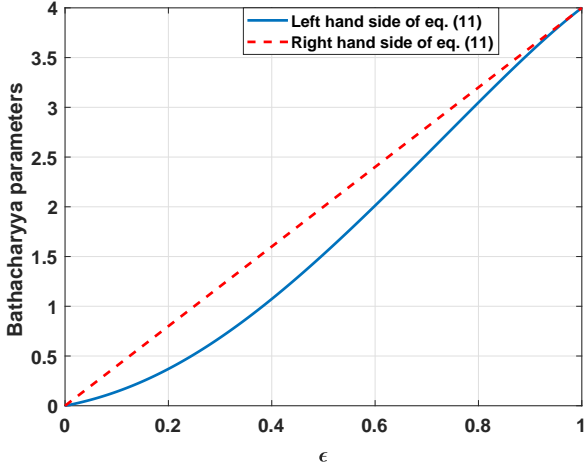}
\caption{Comparison between the left and the right hand sides of eq. (\ref{Theorem}).}
\label{fig:proof}
\end{figure}


\section{Analysis and Numerical Results} 
In this section, we first analyze the numerical result of the proposed scheme and compare it with the straightforward polar repetition scheme. Then, we provide complexity analysis of both of these schemes.

\subsection{Numerical Analysis}

In this subsection, we provide numerical results for the 
capacity of the proposed polar coded repetition scheme for different numbers of repetitions over BEC and compare them with the capacity of the simple repetition scheme and the Shannon bound. Fig. (\ref{fig:comparison2}) illustrates the capacities of the proposed schemes for $2$, $4$ and $8$ repetitions. It can be observed that the proposed scheme outperforms the simple repetition scheme for all of these repetitions. The irregular scheme also slightly outperforms the regular one for $4$ repetitions. On the other hand, as the number of repetitions increases, the gap to the Shannon bound increases as expected. 

\begin{figure}[h]
\vspace{-1mm}
\centering
\includegraphics[width=1.1\linewidth]{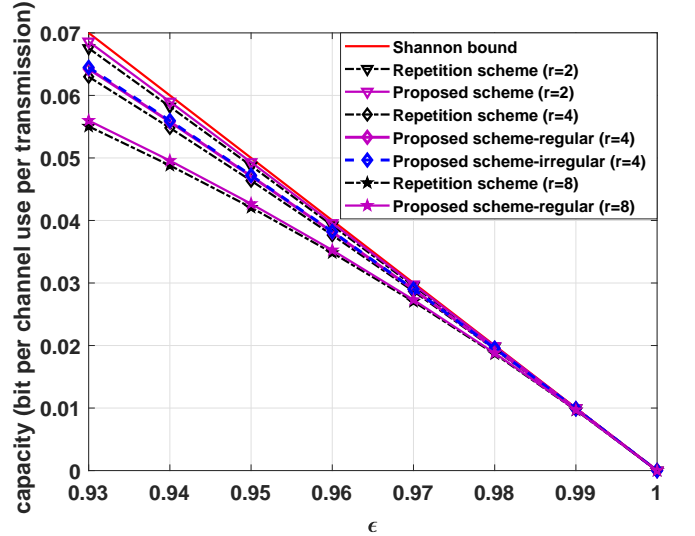}
\caption{Capacity of the proposed scheme compared with the capacity of the repetition scheme for $r=2,4,8$.}
\label{fig:comparison2}
\end{figure}

\subsection{Complexity Analysis} \label{Complexity}

The complexity of the straightforward  polar-repetition scheme consists of the complexity of the outer polar code of size $n$, $O(n \log n)$, and the repetition code of size $r$, $nr$. Thus, the complexity of the total decoding process for this scheme is $O(nr+ n \log n)$.

The complexity of the proposed polar coded repetition scheme consists of the complexity of the three stages. The first stage is $r$ different kernels $G^{\prime}_r$ where the complexity of each of $G^{\prime}_r$ of size $r$ is $ O(n \log r)$. The second stage is repetition code of size $r$  with complexity $nr$. Finally, the third stage is kernel $G_2^{\otimes (m-t)}$ of size $n/r$ with complexity $O({n} \log \frac{n}{r})$. Hence, the complexity of the total decoding process is $O(nr + rn \log r + {n} \log \frac{n}{r})=O(nr+n \log n+ n (r-1) \log r)$. If $r$ be a constant, then the complexity of the polar coded repetition will be of order $O(n \log n)$.  

\section{Conclusion}
In this paper, we proposed a modified approach for the repetition scheme. In this  scheme, we used polar codes as the outer code and proposed to transmit slightly modified codeword in each repetition. We showed that the proposed scheme outperforms the simple repetition scheme, in terms of the asymptotic achievable rate, over BEC while it keeps the decoding complexity almost the same as the repetition scheme.

\end{document}